\def\final{0} 
\newcommand{\vnote}[1]{[{\small Vicky: \bf #1}]\marginpar{*}}
\newcommand{\sidecomment}[1]{\marginpar{\tiny #1}}
\newcommand{\vnote}[1]{}
\newcommand{\sidecomment}[1]{}
\newtheorem{lemma}{Lemma}[section]
\newtheorem{theorem}[lemma]{Theorem}
\newtheorem{definition}[lemma]{Definition}
\newcommand{\ms}[1]{\ensuremath{\mathsf{#1}}}
\newcommand{\ket}[1]{\ensuremath{|#1\rangle}}
\newcommand{\argmax}{\operatornamewithlimits{arg\ max}}
\newcommand{\argmin}{\operatornamewithlimits{arg\ min}}
\def\reals{{\mathbb R}}
\newcommand{\ver}{{\ms{V}}}
\newcommand{\edge}{{\ms{E}}}
\newcommand{\Oedge}{{\ms{OE}}}
\newcommand{\uni}{U}
\renewcommand{\deg}{{\ms{deg}}}
\newcommand{\union}{\cup}
\newcommand{\nbr}{{\ms{nbr}}}
\newcommand{\onbr}{{\ms{Onbr}}}
\newcommand{\fnbr}{{\ms{Fnbr}}}
\newcommand{\energy}{{\mathcal{E}}}
\newcommand{\oy}{{\mathcal{Y}}}
\newcommand{\ham}{{\mathcal{H}}}
\newcommand{\mdef}{\stackrel{\mathrm{def}}{=}}
\newcommand{\emb}{{\ms{emb}}}
\newcommand{\Gemb}{{G_{\ms{emb}}}}
\begin{document}

\title{Minor-Embedding in Adiabatic Quantum Computation: I. The Parameter
  Setting Problem
}

\author{Vicky Choi 
\\{\em vchoi@dwavesys.com}
\\ D-Wave Systems Inc. }

\maketitle

\begin{abstract}
  We show that the NP-hard  quadratic
  unconstrained binary
  optimization (QUBO) problem on a graph $G$ can be solved 
  using an adiabatic quantum computer that implements an Ising spin-1/2
  Hamiltonian, by reduction through {\em minor-embedding} of
  $G$ in the quantum hardware graph $\uni$.
  There are two components to this reduction: {\em embedding} and {\em
    parameter setting}. The embedding problem is to find a minor-embedding
  $\Gemb$ 
  of a graph $G$ in $\uni$, which is a
  subgraph of $\uni$ such that $G$ can be obtained from $\Gemb$ by
  contracting edges. 
 The parameter setting problem is to determine the
  corresponding parameters, qubit biases and coupler strengths, of the embedded
  Ising Hamiltonian.
  In this paper, we focus on the parameter setting problem.
  As an example, we demonstrate the embedded Ising Hamiltonian for solving
  the maximum independent 
  set (MIS) problem via  adiabatic quantum computation (AQC) using an Ising
  spin-1/2 system.
We close by discussing several related algorithmic problems that need to be
investigated in order to facilitate the design of adiabatic algorithms and
AQC architectures.
\end{abstract}

\section{Introduction}
\label{sec:introduction}
Adiabatic Quantum Computation (AQC) was proposed by Farhi
et~al.\cite{FGGS00,FGGLLP01} in 2000. 
The AQC model is based on the {\em adiabatic
theorem} (see, e.g. \cite{Reichardt-04}).  Informally, the theorem says that if we take a
quantum system whose Hamiltonian ``slowly'' changes from $\ham_{\ms{init}}$ (initial
Hamiltonian) to $\ham_{\ms{final}}$ (final Hamiltonian), 
then if we start with the system in the {\em groundstate}
(eigenvector corresponding to the lowest
eigenvalue) of $\ham_{\ms{init}}$, then at the end of the evolution the
system will be ``predominantly'' in the ground
state of $\ham_{\ms{final}}$. The theorem is
used to construct {\em adiabatic algorithms} for optimization problems in
the following way:  The initial Hamiltonian $\ham_{\ms{init}}$ is designed such
that the system can be readily initialized into its known groundstate, while the
groundstate of the final Hamiltonian $\ham_{\ms{final}}$ encodes the
answer to the desired optimization problem.
 The complete (or {\em system}) Hamiltonian at a time
$t$ is then given by
\begin{equation*}
\ham(t) = \biggl(1-s(\frac{t}{T})\biggr)\ham_{\ms{init}} + s(\frac{t}{T}) \ham_{\ms{final}}
\end{equation*}
 for $t \in [0,T]$ where $s$ increases monotonically from $s(0)=0$ to $s(1)=1$ and
$T$ is the total running time of the algorithm. If $T$ is large
 enough, which is determined by the minimum spectral gap (the difference between the two lowest energy levels) of
the system Hamiltonian, the adiabatic theorem guarantees the  state at time $t$
 will be the groundstate of $\ham(t)$, leading to the solution, the ground
 state of $\ham(T)=\ham_{\ms{final}}$. 

It is believed that AQC is advantageous over standard (the gate model) 
quantum computation in that it is more robust against environmental noise~\cite{CFP02,amin-2008-100,amin-2008}.
In 2004, D-Wave Systems Inc. undertook the endeavor to build an 
adiabatic quantum computer for solving NP-hard problems. 
Kaminsky and Lloyd~\cite{KL02} 
proposed a scalable architecture for AQC of NP-hard problems. 
Note that AQC is polynomially equivalent to the standard quantum
computation, and therefore AQC is universal. 
Farhi
et~al. in their original paper~\cite{FGGS00} showed that AQC can be
efficiently simulated by standard quantum computers.  
In 2004, Aharonov~et~al~\cite{ADKLLR04} proved the more difficult converse
statement that standard quantum computation can be efficiently simulated by
AQC. 
 In this
paper, however, we will focus on a subclass of Hamiltonians, known as 
{\em Ising models in a transverse field}, that are NP-hard but not
universal for quantum computation. This subclass of Hamiltonians has been 
implemented by D-Wave Systems Inc.

D-Wave's quantum hardware architecture can be viewed as an undirected
graph $\uni$ with weighted vertices and weighted edges. See
Figure~\ref{fig:extendedGrid} for an example.
\begin{figure}[h]
\centering{
\includegraphics{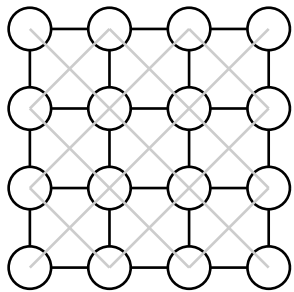}
} 
\caption{An example hardware graph: a $4 \times 4$ extended grid. Each qubit
  is coupled with its nearest and next-nearest neighbors.}
\label{fig:extendedGrid}
\end{figure}
Denote the vertex set of $\uni$ by
$\ver(\uni)$ and the edge set of $\uni$ by $\edge(\uni)$.
Each vertex $i \in \ver(\uni)$ corresponds to a qubit, and each edge $ij
\in \edge(\uni)$ corresponds to a coupler between qubit $i$ and qubit
$j$.
In the following, we will use qubit and vertex, and coupler and edge
interchangeably when there is no confusion.
There are two weights, $h_i$ (called the qubit {\em bias}) and $\Delta_i$
(called the {\em tunneling amplitude}), associated
with each qubit $i$. There is a weight $J_{ij}$ (called the coupler {\em
  strength})
 associated with each coupler
$ij$. In general, these weights are functions of time, i.e., they vary
over the time, e.g., $h_i(t)$. 

The Hamiltonian of the Ising model in a
  transverse field thus implemented is:
\begin{equation*}
\ham(t) = \sum_{i \in \ver(G)} h_i(t) \sigma^z_i + \sum_{ij \in \edge(G)} J_{ij}(t)
\sigma^z_i \sigma^z_j + \sum_{i \in \ver(G)} \Delta_i(t) \sigma^x_i 
\label{IsingEq}
\end{equation*}
with $\sigma_i^z = I \otimes I \otimes \ldots \otimes \sigma^z \otimes
\ldots \otimes I$ (the $\sigma^z$ is in the $i$th position), similarly for
$\sigma^x_i $ and $\sigma^z_i\sigma^z_j$,
where
$I$  is the $2 \times 2$ identity
matrix, while
$\sigma^z$ and $\sigma^x$ are the Pauli matrices given by
\begin{equation*}
\sigma^z = \begin{bmatrix} 1 & 0 \\ 0 & -1\end{bmatrix}, \quad \text{and} \quad \sigma^x = \begin{bmatrix} 0 & 1 \\ 1 & 0 \end{bmatrix}.
\end{equation*}

In general, the transverse field (i.e. $ \sum_{i \in \ver(G)} \Delta_i
\sigma^x_i$) encodes the initial Hamiltonian $\ham_{\ms{init}}$, while the
other two terms encode the final Hamiltonian:
\begin{equation}
\ham_{\ms{final}} = \sum_{i \in \ver(G)} h_i \sigma^z_i + \sum_{ij \in \edge(G)} J_{ij}
\sigma^z_i \sigma^z_j.
\label{eqn:final-ham}
\end{equation}
One can show (see Appendix) that the eigenvalues and the corresponding eigenstates of
$\ham_{\ms{final}}$ are
encoded in the following energy function of Ising Model:
\begin{eqnarray}
\energy(s_1,\ldots, s_n) = \sum_{i \in \ver(G)} h_i s_i + \sum_{ij \in\edge(G)} J_{ij}
      s_is_j
\label{eqn:OE}
\end{eqnarray}
where $s_i \in \{-1,+1\}$, called a {\em spin}, and  $h_i, J_{ij} \in \reals$.
In particular, the smallest eigenvalue of $\ham_{\ms{final}}$ corresponding to the minimum
 of $\energy$, and $\argmin \energy$ corresponds to its eigenvector (called {\em ground
  state}) of $\ham_{\ms{final}}$. 
When there is no confusion, we use the energy function of Ising model
and Ising Hamiltonian interchangeably. Hereafter, we refer the problem of
finding the minimum 
energy of the Ising Model or
equivalently the groundstate of Ising Hamiltonian as the {\em Ising problem}.
It can be shown that the Ising problem is equivalent to the problem of
Quadratic
Unconstrained Boolean
Optimization (QUBO)(see~\cite{BHT06} and references therein), which has
been shown to be a common model for a wide variety of discrete
optimization problems. More specifically, finding the minimum of $\energy$
in \eqref{eqn:OE} is equivalent to finding the maximum of the following
function 
(which is also known as
    quadratic pseudo-Boolean function~\cite{BH02})
of QUBO on the same graph: 
\begin{eqnarray}
\oy(x_1, \ldots, x_n) &=& \sum_{i \in \ver(G)} c_i x_i - \sum_{ij \in\edge(G)} J_{ij}
  x_ix_j 
\label{eqn:OY}
\end{eqnarray}
where $x_i \in  \{0,1\}$, $c_i, J_{ij} \in \reals$. 
The
correspondence between the parameters, $h$s and $c$s
will be shown in Section~\ref{sec:QUBO-Ising}. 
Therefore, given an Ising/QUBO problem on graph $G$, one can thus solve the
problem on an  adiabatic quantum computer (using an Ising spin-1/2 system)
if $G$ can be embedded as a subgraph of the quantum hardware graph $\uni$. We refer this
embedding problem as {\em subgraph-embedding},
to be
defined formally in Section~\ref{sec:embed}.
In general, there are physical constraints on the hardware
graph $\uni$. In particular, there is a {\em degree-constraint} in that
each qubit can have at most a constant number 
of couplers dictated by hardware design. Therefore, besides the possible difficulty of the
subgraph-embedding problem\footnote{Readers should be cautious not to confuse
  this embedding problem with the NP-complete subgraph isomorphism problem,
  in which both graphs are unknown. However, in our case, the hardware
  graph is known. For example, if the hardware graph is a complete graph,
  then the embedding problem will be trivial.}, the graphs that can be
solved on a given hardware graph $\uni$ through
subgraph-embedding must also be degree-bounded. 
Kaminsky~et~al.~\cite{KL02,KLO04}
observed and proposed that one can embed $G$ in $U$ through
ferromagnetic coupling ``dummy vertices'' to solve 
Maximum Independent Set (MIS) problem \footnote{MIS is a special case of QUBO and
  will be addressed in Section~\ref{sec:wmis}.} of planar cubic graphs (regular graphs of
degree-3)\footnote{In their earlier paper~\cite{KL02}, it was said for
  graphs with degree at most 3, but the Ising Hamiltonian they used there was for
  regular graphs of degree-3.} on an adiabatic quantum computer. 
In particular, they proposed an $n \times n$
square lattice as a scalable hardware architecture on which all $n/3$-vertex planar cubic graphs are
embeddable.  
The notion of embedding here follows naturally from physicists' intuition
that 
each {\em logical qubit} (corresponding to a vertex in the input
graph) is mapped to a subtree of {\em physical qubits} (corresponding to vertices in
the hardware graph) that are ferromagnetically coupled such 
that each
subtree of physical qubits acts like a single logical qubit.
For example, in Figure~\ref{fig:example-embedding}, the logical qubit $1$
(in orange color) of the graph $G$ is mapped to a subtree of physical
qubits (labelled $1$) of the square lattice. 
\begin{figure}[h]
\centering{
\includegraphics[angle=90]{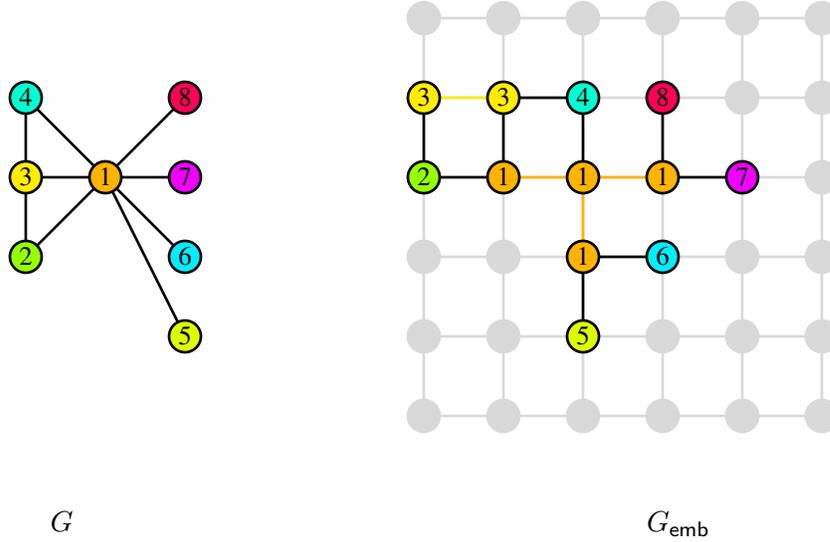}
$G$\hspace*{3in}$\Gemb$
} 
\caption{$\Gemb$(right) is a minor-embedding of $G$(left) in the square
  lattice $\uni$. Each vertex (called a logical qubit) of $G$ is mapped to a
  (connected) subtree of (same color/label) vertices (called physical qubits) of $\uni$. $G$
  is called a (graph) minor of $\uni$.}
\label{fig:example-embedding}
\end{figure}
Informally, a minor-embedding $\Gemb$ of a graph $G$ in the hardware graph
$\uni$ is a subgraph of $\uni$ such that $\Gemb$ is an ``expansion'' of $G$
by replacing each vertex of $G$ with a (connected) subtree of $\uni$, or
equivalently, $G$ can be obtained from $\Gemb$ by contracting edges (same color
in Figure~\ref{fig:example-embedding}). In
graph theory, $G$ is called a (graph) {\em minor} of $\uni$ (see for
example~\cite{Distel05}). The minor-embedding will be formally defined 
in Section~\ref{sec:embed}.
(Remark: The embedding in \cite{KL02,KLO04} is a special case of minor-embedding,
known as {\em topological-minor} embedding.)

By {\em reduction
through minor-embedding}, we mean that one can reduce the original Ising Hamiltonian on the input
graph $G$ to the embedded Ising Hamiltonian
$\ham^{\emb}$ on its minor-embedding $\Gemb$, i.e., the solution to the
embedded Ising Hamiltonian gives rise to the solution to the original
Ising Hamiltonian.
The intuition suggests that the reduction will be correct provided that the
ferromagnetic coupler strengths used are sufficiently strong (i.e., large negative number).
However, how strong is ``strong enough''? 
In~\cite{KL02,KLO04}, they do not address this question, i.e. what are the
required strengths 
of these ferromagnetic couplers?
In Section~\ref{sec:easy-upper}, we will show that it is not difficult to
give an upper bound for the ferromagnetic coupler strengths and thus explain the 
intuition. However,
there are indications~\cite{AC08} that too
strong ferromagnetic coupler strengths might slow down the adiabatic
algorithm. 
Furthermore, an adiabatic quantum computer is an analog computer and analog
parameters can only be set to a certain degree of precision (a condition
much more stringent
than the setting of digital parameters). 
Hence, the allowed values
of coupler strengths are limited. 
Therefore, from the
computational point of view, it is important 
to derive as small (in terms of magnitude) as possible sufficient condition
for these ferromagnetic coupler strengths.
Furthermore, what should  the
bias for physical qubits be? 

There are two components to
the reduction: {\em embedding} and {\em parameter setting}.
The embedding problem is to find a minor-embedding
  $\Gemb$ 
  of a graph $G$ in $\uni$. 
This problem is interdependent of the hardware graph design problem, which will be discussed in
Section~\ref{sec:discussion}.
The parameter setting problem is to set the
  corresponding parameters, qubit bias and coupler strengths, of the embedded
  Ising Hamiltonian.
In this paper, we assume that the minor-embedding $\Gemb$ is given, and
focus on the parameter setting problem (of the final Hamiltonian).
Note that there are two aspects of efficiency of a reduction. One is how
efficient one can reduce the original problem to the reduced
problem. For example, here we are concerned how efficiently we can compute
the minor-embedding and how efficiently we can compute the new parameters of
the embedded Ising Hamiltonian.
The other aspect concerns about the
  efficiency (in terms the running time) of the adiabatic algorithm for the
  reduced problem. 
In general, the latter depends on the former. For example, the running time
of the adiabatic algorithm may depend on a ``good'' embedding that is
reduced to. 
According to the adiabatic theorem (see, e.g. \cite{Reichardt-04}), 
the running time of the adiabatic algorithm depends on
the minimum spectral gap (the difference between the two lowest energy levels) of
the system Hamiltonian, which is defined by both initial Hamiltonian and final
Hamiltonian. 
That is, the running time (and thus the efficiency of the reduction) will
depend on both the initial Hamiltonian and the final Hamiltonian. 
In this paper, our focus is only on the final Hamiltonian, and therefore we
are not able to address the running time of the adiabatic
algorithm. Furthermore, the estimation of the minimum spectral gap of the
(system) Hamiltonian is in general hard. Consequently, analytically
analyzing the
running time of an adiabatic algorithm is in general an open question. 

Finally, let us remark that there is another different approach based on  perturbation theory
 by Oliveira \& Terhal~\cite{oliveira-2005} for performing the reduction.
 In particular, 
 they employed perturbative gadgets  to reduce a 2-local
(system) Hamiltonian to a 2-local Hamiltonian on a 2-D square lattice, and
  were able to show (as in the pioneering work \cite{kempe-2006-35}) that the minimum spectral gap
 (and thus the running time) of the system Hamiltonian is preserved (up to
 a polynomial factor) after the reduction, for any given initial
 Hamiltonian.
However, besides the
ineffective embedding, as pointed out in \cite{bravyi-2008}, the method is
``unphysical'' 
as it requires that each parameter
grows with the system size. 
\vnote{Is it possible that by choosing the embedding-dependent initial
  Hamiltonian, one can get a better running time?}

The rest of the paper is organized as follows. 
In
Section~\ref{sec:QUBO-Ising}, we recall the equivalences between the QUBO problem
and the Ising problem. In Section~\ref{sec:embed}, we introduce the
minor-embedding definition and mention  related work in graph theory. 
In Section~\ref{sec:parameters}, we derive the new parameters, namely
values for the qubit bias and the
sufficient condition for the ferromagnetic coupler strengths, for the embedded Ising Hamiltonian such that
the original Ising problem can be correctly solved through the embedded
Ising problem. 
In Section~\ref{sec:wmis}, we show the embedded Ising Hamiltonian
for solving the (weighted) MIS problem. 
Finally, we conclude with several related algorithmic problems  that need
to be investigated in order to facilitate the design of efficient adiabatic algorithms and
AQC architectures in Section~\ref{sec:discussion}.

\section{Equivalences Between QUBO
  and the Ising Problem}
\label{sec:QUBO-Ising}
In this section, we recall the
equivalences between the problem of QUBO (maximization of $\oy$ in Eq.~\eqref{eqn:OY})
and the Ising problem (minimization of $\energy$ in Eq.~\eqref{eqn:OE}).
Notice that $x_i = \frac{s_i+1}{2}$, that is, $x_i=1$ corresponds to
$s_i=1$, and $x_i=0$ corresponds to $s_i=-1$.
Using a change of variables, we have
\begin{eqnarray*}
\oy(x_1, \ldots, x_n) 
  &=& \frac{1}{2} \sum_{i \in \ver(G)} c_i - \frac{1}{4} \sum_{ij \in\edge(G)}J_{ij}  - \frac{1}{4} \left(\sum_{i \in \ver(G)}
  \left(\sum_{j \in \nbr(i)} J_{ij} - 2c_i \right)s_i + \sum_{ij \in \edge(G)} J_{ij} s_i s_j\right)
\end{eqnarray*}
where $\nbr(i) \mdef \{j: ij \in \edge(G)\}$, the neighborhood of vertex
$i$, for $i \in \ver(G)$.

Therefore, $\ms{Max} \oy(x_1, \ldots, x_n)$ in Eq.~\eqref{eqn:OY} 
is equivalent to $\ms{Min} \energy(s_1,\ldots, s_n)$ in Eq.~\eqref{eqn:OE}
where $h_i =\sum_{j \in \nbr(i)} J_{ij} - 2 c_i$,  or 
$c_i=1/2(\sum_{j \in
  \nbr(i)} J_{ij} - h_i)$
for $i \in \ver(G)$.
See Figure~\ref{fig:new-parameters} for the correspondences between
parameters in QUBO and the Ising model.
\begin{figure}[h]
\centering{
\includegraphics[width=5in]{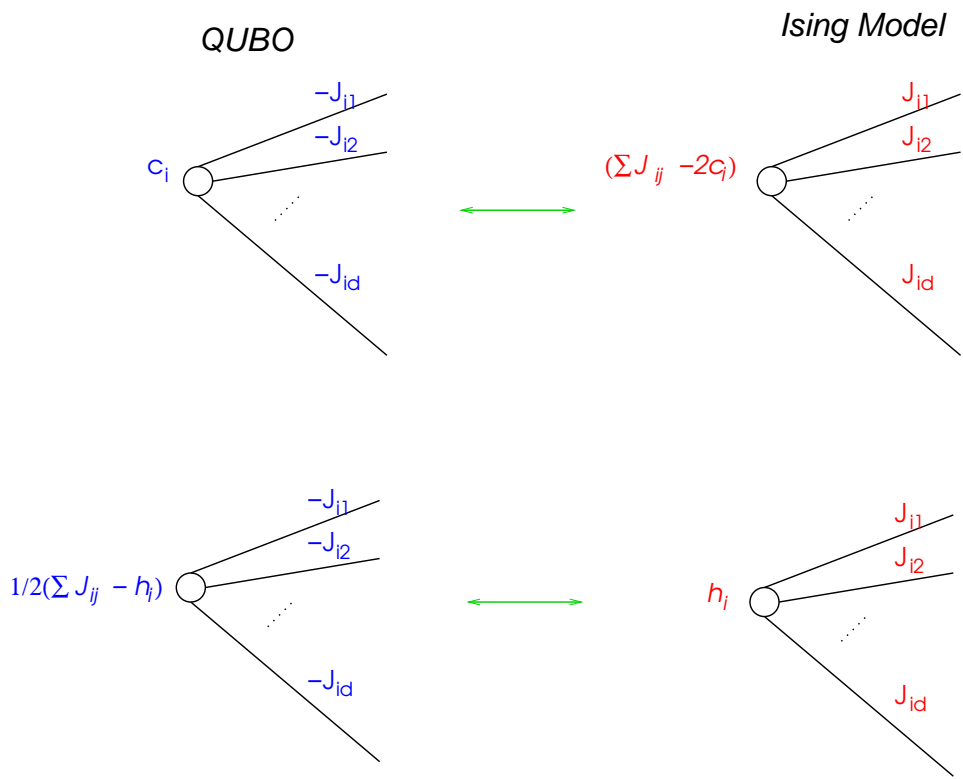}
} 
\caption{The correspondences between the parameters in QUBO and the Ising Model.}
\label{fig:new-parameters}
\end{figure}

\section{Minor-Embedding}
\label{sec:embed}
\begin{definition}
Let $\uni$ be a fixed hardware graph. Given $G$, 
the {\em minor-embedding} of $G$ is
defined by 
$$\phi: G \longrightarrow \uni$$ such that
\begin{itemize}
\item each vertex in $\ver(G)$ is mapped to a connected subtree $T_i$ of $\uni$;
\item there exists a map $\tau:
  \ver(G) \times \ver(G) \longrightarrow \ver(\uni)$
such that 
for each $ij \in \edge(G)$, there are corresponding $i_{\tau(i,j)}
  \in \ver(T_i)$ and $j_{\tau(j,i)} \in \ver(T_j)$
  with  $i_{\tau(i,j)}j_{\tau(j,i)} \in \edge(\uni)$.
\end{itemize}  
Given $G$, if $\phi$ exists, we say that $G$ is {\em
  embeddable}
 in $\uni$. In graph theory, $G$ is called a {\em minor} of $\uni$. 
When $\phi$ is clear from the context, we denote the minor-embedding
  $\phi(G)$ of $G$
  by $\Gemb$.
\end{definition} 

Equivalently, one can think of a minor $G$ of $\uni$ as a graph that can be obtained from a subgraph of
  $\uni$ by contracting edges.
See Figure~\ref{fig:example-embedding} for an example.

In particular, there are two special cases of minor-embedding:
\begin{itemize}
\item{\em Subgraph-embedding:} Each $T_i$ consists of a single vertex in
  $\uni$. That is, $G$ is isomorphic to $\Gemb$ (a subgraph of $\uni$).
\item{\em Topological-minor-embedding:} Each $T_i$ is a chain (or path) of vertices in $\uni$.
\end{itemize}

Minors are well-studied in graph theory, see for
example~\cite{Distel05}. 
Given a {\em fixed} graph $G$, there are algorithms that find a minor-embedding of $G$ in
$\uni$ in polynomial time of size of $\uni$, from the pioneering
$O(|\ver(\uni)|^3)$ time algorithm by Robertson and Seymour~\cite{RS95} to recent nearly linear
time algorithm of B.~Reed (not yet published). However, it is worthwhile to reiterate that these
algorithms are for {\em fixed} $G$, and their running times are {\em exponential} in the
size of $G$. 
Here the minor-embedding problem is to find a minor-embedding of $G$ (for any
given $G$) while
fixing $\uni$. 
To the best of our knowledge, the
only known work related to our minor-embedding problem was by Kleinberg and Rubinfeld~\cite{KR96}, in
which they showed that there is a randomized polynomial algorithm, based on a
random walk, to find a minor-embedding in a given degree-bounded expander. 

Our embedding problem might appear similar to the embedding
  problem from parallel architecture studies. However, besides the
  different physical constraints for the design of architectures, the requirements are very
  different. In particular, in our embedding problem, we do not allow for
  {\em load} $>1$, which is the maximum number of logical qubits mapped to
  a single physical qubit. 
  Also, we require {\em dilation}, which is the
  maximum number of stretched edges (through other qubits), to be exactly 1.
However, all of the existing research on embedding problems for parallel
  processors~\cite{Leighton-book},
at least one of the conditions  is violated (namely either load
  $>1$ or dilation $>1$).
In
this paper, our focus is on 
parameter setting such that the reduction is correct, and not on the 
minor-embedding algorithm and/or the related (minor-universal) 
hardware graph
$\uni$ design. These problems
will be addressed in a subsequent paper.

\section{Parameter Requirement for the Embedded Ising Hamiltonian}
\label{sec:parameters}
Let the QUBO problem, specified by $\oy$ and $G$,  be given as in
Eq.\eqref{eqn:OY},
 and the corresponding $\energy$ in
 Eq.\eqref{eqn:OE}. 
Suppose $\Gemb$ is a
minor-embedding of $G$, and let $\energy^{\emb}$ be the embedded energy
function  
associated with $\Gemb$:
\begin{eqnarray*}
  \energy^{\emb} (s_1, \ldots, s_N) &=& \sum_{i \in \ver(\Gemb)} h'_i s_i + \sum_{ij \in
      \edge(\Gemb)} J'_{ij} s_i s_j
\end{eqnarray*}
where $|\ver(\Gemb)| =N$.

Our goal is to find out the requirement for 
new parameters, $h'$s and $J'$s, such that one can solve the original Ising
problem on $G$ by solving the embedded Ising problem on its embedding
$\Gemb$, or equivalently, 
such that there is an one-one
correspondence between the minimum of $\energy$ (and thus the maximum of
$\oy$) 
and the minimum of $\energy^{\emb}$.

As we mentioned in the introduction, the idea that we can solve the original
Ising problem (i.e. finding the groundstate of the Ising Hamiltonian) on the input graph $G$ by solving the new Ising
problem on the embedded graph $\Gemb$ in $\uni$ is that one can use ferromagnetic
couplers to connect the physical qubits in each $T_i$ of $\Gemb$ such that
the subtree $T_i$ will act as one logical qubit $i$ of $G$. 


\paragraph{Notation.} First, we recall that 
$\ver(\Gemb) = \bigcup_{i \in \ver(G)} \ver(T_i)$ and
$\edge(\Gemb) =  \bigcup_{i \in \ver(G)} \edge(T_i) \union \bigcup_{ij
  \in \edge(G)}i_{\tau(i,j)}j_{\tau(j,i)}.$
We distinguish the edges within $T_i$s
from the edges corresponding to the edges in the original graph. 
Denote the latter by $\Oedge(\Gemb)$, that is, 
$\Oedge(\Gemb) = \bigcup_{ij \in \edge(G)} i_{\tau(i,j)}j_{\tau(j,i)}$.
(The black edges in $\Gemb$ of Figure~\ref{fig:example-embedding} correspond to $\Oedge(\Gemb)$.)
For convenience, $\onbr(i_k) \mdef \nbr(i_k) \cap \Oedge(\Gemb)$
and $\fnbr(i_k) \mdef \nbr(i_k) \setminus \onbr(i_k)$,
$i_k \in \ver(T_i)$. Note $j_{\tau(j,i)} \in \onbr(i_k) \Leftrightarrow ij
\in \edge(G)$.

The above intuition 
suggests that  we use  
the same coupler strength for each original edge, i.e., 
$J'_{i_{\tau(i,j)}j_{\tau(j,i)}} = J_{ij}$  for ${i_{\tau(i,j)}j_{\tau(j,i)}} \in
\Oedge(\Gemb)$, and 
use ferromagnetic coupler strength, $F_i^e(<0)$, for each edge
$e \in \edge(T_i)$, and redistribute the bias $h_i$ of a logical qubit $i$ to 
its physical qubits in $T_i$. That is, we choose $h'_{i_k}$  for physical
qubit $i_k \in \ver(T_i)$ such that 
$\sum_{{i_k} \in \ver(T_i)} h'_{i_k} = h_i$.

Therefore, we have 
\begin{eqnarray}
    \energy^{\emb}(s_1, \ldots, s_N)
    &=& \sum_{i \in \ver(G)} \left( \sum_{i_k \in \ver(T_i)} h'_{i_k} s_{i_k} + \sum_{i_pi_q
      \in \edge(T_i)} F_i^{pq}s_{i_p}s_{i_q} \right) + \sum_{ij \in \edge(G)} J_{ij} 
s_{i_{\tau(i,j)}} s_{j_{\tau(j,i)}}
\label{eqn:Eemb-general}
\end{eqnarray}

\subsection{An Easy Upper Bound for the Ferromagnetic Coupler Strengths}
\label{sec:easy-upper}
In this section, we derive an easy upper bound for the ferromagnetic coupler
strengths.
The derivation is based on the {\em penalty or multiplier} method, in
which a constrained optimization is reduced to an unconstrained one
by replacing the
corresponding expression into the objective function as a penalty term with
a large multiplier.\footnote{This is closely related to the discrete Lagragian Multiplier method. However,
the Lagragian multipliers are treated as unknown or iteratively solved
with the Lagragian function.}
For example, this was used in the reduction
from general (higher-order) unconstrained binary optimization to the
quadratic one (see \cite{BH02}).

Notice that by construction (i.e., $G$ is a minor of $\Gemb$ and $h'_{i_k}$s are
chosen such that $\sum_{{i_k} \in \ver(T_i)} h'_{i_k} =
h_i$), minimization of $\energy$ in \eqref{eqn:OE} is equivalent to 
\begin{eqnarray}
\label{eqn:constrain-E}
\min \energy_{\ms{constrained}}(s_1, \ldots, s_N) &=&
\sum_{i \in \ver(G)} \sum_{i_k \in \ver(T_i)} h'_{i_k} s_{i_k} + \sum_{ij \in \edge(G)} J_{ij} 
s_{i_{\tau(i,j)}} s_{j_{\tau(j,i)}} \\
\mbox{ subject to } && \nonumber\\
 s_{i_p}s_{i_q} &=& 1 \mbox{ for all } i_pi_q \in \edge(T_i), i
\in \ver(G)\nonumber
\end{eqnarray}
where 
the condition $s_{i_p}s_{i_q} = 1$ for
all $i_pi_q \in \edge(T_i)$
is equivalent to 
requiring the spins  of physical qubits that correspond to the
same logical qubit $i$
to be of the same sign.


The corresponding unconstrained minimization is thus
\begin{eqnarray}
\label{eqn:unconstrain-E}
\min  \energy'(s_1, \ldots, s_N) &=&
\sum_{i \in \ver(G)} \sum_{i_k \in \ver(T_i)} \left( h'_{i_k} s_{i_k} + \sum_{i_pi_q
      \in \edge(T_i)} F_i^{pq}(s_{i_p}s_{i_q}-1) \right) + \sum_{ij \in \edge(G)} J_{ij} 
s_{i_{\tau(i,j)}} s_{j_{\tau(j,i)}}
\end{eqnarray}
which is equivalent to solving Eq.\eqref{eqn:Eemb-general} as $\sum_{i \in \ver(G)} \sum_{i_pi_q
      \in \edge(T_i)} F_i^{pq}$ is a constant. 
We are interested in how large (in terms of magnitude) $F_i^{pq}$s are
sufficient to guarantee  that 
the solution of Eq.\eqref{eqn:unconstrain-E} gives the solution to  Eq.\eqref{eqn:constrain-E}, 
 and consequently to Eq.\eqref{eqn:OE}.
The result is stated in the following theorem.
\begin{theorem}
Let   $\energy_{\ms{constrained}} $  and $\energy'$  given as in
Eq.\eqref{eqn:constrain-E} and Eq.\eqref{eqn:unconstrain-E}. 
Suppose that for $i \in \ver(G)$,  
\begin{equation}
  F_i^{pq} < - \left(|h_i| + \sum_{j \in
  \nbr(i)}|J_{ij}| \right)~\quad\quad \text{ for } i_pi_q \in \edge(T_i)
\label{eq:F-condition}
\end{equation}
then we have $s_{i_p}^*s_{i_q}^* = 1 \mbox{ for all } i_pi_q \in \edge(T_i)$, $i \in
\ver(G)$, where $(s_{1_1}^*, \ldots, s_{N}^*) = \argmin \energy'$.
Consequently, 
$$\min \energy_{\ms{constrained}}(s_1, \ldots, s_N) = \min \energy'(s_1,
\ldots, s_N).$$
\label{thm:easy}
\end{theorem}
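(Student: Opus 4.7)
The plan is a penalty-method argument by contradiction. The key observation is that $\energy'$ differs from $\energy_{\ms{constrained}}$ only through the ferromagnetic penalty $\sum_{i\in\ver(G)}\sum_{i_pi_q\in\edge(T_i)}F_i^{pq}(s_{i_p}s_{i_q}-1)$, which equals $0$ when every constraint $s_{i_p}s_{i_q}=1$ is satisfied and contributes the strictly positive amount $2|F_i^{pq}|$ on each violated edge. Hence, once I show that no minimizer of $\energy'$ can have a violated constraint, the equality $\min\energy_{\ms{constrained}}=\min\energy'$ follows immediately, since the minimizer then lies in the feasible set on which the two functions coincide.

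Suppose for contradiction that $s^*\in\argmin\energy'$ has $s_{i_p}^*s_{i_q}^*=-1$ for some $i_pi_q\in\edge(T_i)$. Because $T_i$ is a tree, the restriction of $s^*$ to $\ver(T_i)$ partitions its vertices into maximal monochromatic connected components, and there is at least one such component. Pick a \emph{leaf} monochromatic component $L\subseteq\ver(T_i)$, i.e., one connected to the rest of $T_i$ by a single edge $e=i_pi_q$ with $i_p\in L$; every spin in $L$ equals a common value $s_L$. Define $s'$ from $s^*$ by flipping every spin in $L$ and leaving every other spin unchanged.

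Computing $\energy'(s')-\energy'(s^*)$: the ferromagnetic term on $e$ drops from $-2F_i^e=2|F_i^e|$ to $0$, a decrease of $2|F_i^e|$; every other ferromagnetic term inside $T_i$ joins two vertices both inside $L$ or both outside $L$, so its product $s_{i_a}s_{i_b}$ is preserved and it does not change. Writing $H_L=\sum_{k\in L}h'_{i_k}$ and $K_L=\sum_{j\in\nbr(i):\,\tau(i,j)\in L}J_{ij}s_{j_{\tau(j,i)}}^*$, the bias terms on $L$ change by $-2s_L H_L$ and the external-coupling terms incident to $L$ change by $-2s_L K_L$. Altogether,
$$\energy'(s')-\energy'(s^*)=2F_i^e-2s_L(H_L+K_L).$$
Since $|K_L|\le\sum_{j\in\nbr(i)}|J_{ij}|$, and under the natural convention that each $h'_{i_k}$ shares the sign of $h_i$ (so $\sum_{k\in L}|h'_{i_k}|\le|h_i|$ and therefore $|H_L|\le|h_i|$), the hypothesis \eqref{eq:F-condition} gives $|F_i^e|>|h_i|+\sum_{j\in\nbr(i)}|J_{ij}|\ge|H_L|+|K_L|\ge|s_L(H_L+K_L)|$, so $F_i^e<s_L(H_L+K_L)$ and the flip strictly decreases the energy, contradicting the minimality of $s^*$.

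The main obstacle is securing the bound $|H_L|\le|h_i|$ on the partial $h'$-sum over a leaf component, which is not automatic for arbitrary $h'_{i_k}$ summing to $h_i$. The cleanest way is to invoke a same-sign convention on the $h'_{i_k}$ (which the paper appears to assume implicitly when speaking of an ``easy upper bound''); alternatively, one may strengthen the hypothesis to bound $|F_i^e|$ in terms of $\max_L|H_L|$ taken over all possible leaf components. Once the first claim is in place, the equality $\min\energy_{\ms{constrained}}=\min\energy'$ is automatic, since $\argmin\energy'$ then lies inside the feasible region of $\energy_{\ms{constrained}}$ and the penalty terms vanish there.
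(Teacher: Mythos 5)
Your proof is correct, but it takes a genuinely different route from the paper's. The paper argues globally: it lower-bounds $\energy'(s^*)$ by isolating the single violated edge's penalty $-2F_f^{pq}$ together with a worst-case bound $-\bigl(\sum_{f_k}|h'_{f_k}|+\sum_{j\in\nbr(f)}|J_{fj}|\bigr)$ on all remaining terms involving $T_f$, and then observes that this already exceeds the energy of the configuration obtained by aligning all of $T_f$, so no violating configuration can be optimal. You instead make a local move: you extract a monochromatic leaf component $L$ of $T_i$ (exactly one violated edge leaves $L$, since contracting monochromatic components of a tree yields a tree), flip it, and compute the exact energy change $2F_i^e-2s_L(H_L+K_L)$. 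Your accounting is right --- all other ferromagnetic edges have both endpoints inside or both outside $L$, so only the bias terms on $L$, the $\Oedge$ couplings incident to $L$, and the single edge $e$ move. What your approach buys is precision: it is essentially the segment-flipping argument the paper reserves for the tighter bound of Theorem~\ref{thm:main}, and it only charges the flip against the terms actually incident to $L$ rather than all of $T_f$, so it would support a sharper hypothesis than \eqref{eq:F-condition}; the paper's cruder global bound is shorter but looser. One honest caveat you raise is real but shared by the paper: the step $|H_L|\le|h_i|$ (and, in the paper's version, the replacement of $\sum_{f_k}|h'_{f_k}|$ by $|h_f|$) requires the $h'_{i_k}$ within each $T_i$ to carry a common sign, so that $\sum_{i_k}|h'_{i_k}|=|h_i|$; the paper uses this implicitly exactly as you do, so flagging it is a point in your favor rather than a gap.
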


\begin{proof}
Let $(s_{1_1}^*, \ldots, s_{N}^*) = \argmin \energy'.$
Suppose on the contrary. That is, there exists $f_pf_q \in \edge(T_f)$, for
some $1 \le f \le n$ such that $s_{f_p}^*s_{f_q}^* =-1$. Then we have
$$ \energy'(s_1^*, \ldots, s_N^*) \ge - \left(\sum_{f_k \in \ver(T_f)}
  |h'_{f_k}| + \sum_{j \in \nbr(f)}|J_{fj}|\right) - 2 F_f^{pq} +
  \energy'_{\ms{rest}}(s_1^*, \ldots, s_N^*)$$ 
where $$\energy'_{\ms{rest}}(s_1, \ldots, s_N) = \sum_{i \in \ver(G), i\neq f} \sum_{i_k \in \ver(T_i)} \left( h'_{i_k} s_{i_k} + \sum_{i_pi_q
      \in \edge(T_i)} F_i^{pq}(s_{i_p}s_{i_q}-1) \right) + \sum_{ij \in
  \edge(G), i\neq f} J_{ij} 
s_{i_{\tau(i,j)}} s_{j_{\tau(j,i)}}.
$$
Thus, if $F$s satisfy Eq.\eqref{eq:F-condition2}, we have $$\energy'(s_1^*, \ldots, s_N^*) > \sum_{f_k \in \ver(T_f)}
  |h'_{f_k}| + \sum_{j \in \nbr(f)}|J_{fj}|  +  \energy'_{\ms{rest}}(s_1^*,
  \ldots, s_N^*)$$
which is at least $\min  \energy'(s_1, \ldots, s_N) = \energy'(s_1^*,
\ldots, s_N^*)$, a contradiction. 
Consequently,  we can conclude that $s_{i_p}^*s_{i_q}^* = 1 \mbox{ for all }
i_pi_q \in \edge(T_i)$, $i \in
\ver(G)$, where $(s_{1_1}^*, \ldots, s_{N}^*) = \argmin \energy'$,
implying the claimed equality of the minima.
\end{proof}

\subsection{A Tighter Bound for the Ferromagnetic Coupler Strengths}
 As we mentioned Section~\ref{sec:introduction}, 
 it is desired to obtain tighter bounds for the ferromagnetic coupler strengths.
 In this section, we show that with a more careful analysis, we can reduce
 the bound by setting  the qubit bias ($h'$)
appropriately.

For $i \in \ver(G)$, 
 let $$C_i \mdef \sum_{j \in \nbr(i)} |J_{ij}| -
|h_i|.$$ 
Observe that if $C_i<0$, that is, $\sum_{j \in \nbr(i)} |J_{ij}| <
|h_i|$. Then we have $s_i^*=-1$ for $h_i>0$, and $s_i^*=+1$ for $h_i<0$,
where $(s_1^*, \ldots, s_n^*) = \argmin \energy$. Therefore, WLOG, for the
rest of the paper, we will assume that $C_i \ge 0$.

\begin{theorem}
Let $\energy$  ($\oy$ resp.) and  $G$ be given as in Eq.\eqref{eqn:OE}
(Eq.\eqref{eqn:OY} resp.). Suppose $\Gemb$ is a
minor-embedding of $G$, 
and let $\energy^{\emb}$ be the energy of the embedded Ising Hamiltonian
given in  Eq.\eqref{eqn:Eemb-general}.
Then
for all $i \in \ver(G)$,
if 
$$ h'_{i_k} = sign(h_i) \left\{
\begin{array}{ll}
  \sum_{j_{\tau(j,i)} \in \onbr(i_k)} |J_{ij}| - C_i/l_i & i_k \mbox{ is one of the
    $l_i$ leaves of } T_i\\
  \sum_{j_{\tau(j,i)} \in \onbr(i_k)} |J_{ij}|  & \mbox{ otherwise }
\end{array}
\right.
$$ 
and
$$F_i^{e} < - \frac{l_i-1}{l_i} C_i ~~ \mbox{for all } e \in \edge(T_i), $$
where $C_i \ge 0$ (defined above)
and $sign(h_i) = +1 (-1$ resp.) if $h_i \ge 0$ ($h_i<0$ resp.),
we have $s_{i_p}^*s_{i_q}^* = 1 \mbox{ for all } i_pi_q \in \edge(T_i)$, $i \in
\ver(G)$, where $(s_{1_1}^*, \ldots, s_{N}^*) = \argmin \energy^{\emb}$.
Consequently, 
there is one-one correspondence between  $\argmin
\energy^{\emb}$ and $\argmin \energy$ (and thus $\argmax \oy$).

Furthermore, if we set $F_i^{e} = - \frac{l_i-1}{l_i} C_i - g_i/2$, for some
$g_i>0$, then the
spectral gap (which is the difference between the two lowest energy levels) of the embedded Ising Hamiltonian will be the minimum
of $\min_{i \in \ver(G)} g_i$ and the spectral gap of the original Ising
Hamiltonian. 
\label{thm:main}
\end{theorem}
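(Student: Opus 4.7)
The plan is to prove the uniformity claim by contradiction: assume some $\sigma^* \in \argmin \energy^{\emb}$ is non-uniform on a subtree $T_f$, and construct a strictly lower-energy configuration by forcing $\sigma^*|_{T_f}$ to be uniform. Before the main argument I would verify the bias bookkeeping $\sum_{i_k \in \ver(T_i)} h'_{i_k} = h_i$: the double sum $\sum_{i_k \in T_i}\sum_{j_{\tau(j,i)} \in \onbr(i_k)} |J_{ij}|$ telescopes to $\sum_{j \in \nbr(i)} |J_{ij}|$, the leaf correction contributes $-l_i\cdot C_i/l_i = -C_i$, and $\mathrm{sign}(h_i)\,(\sum_j |J_{ij}| - C_i) = \mathrm{sign}(h_i)|h_i| = h_i$.

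Assume WLOG $\mathrm{sign}(h_f) = +1$. Let $V_\pm \subseteq \ver(T_f)$ be the two spin classes under $\sigma^*$, let $c$ be the number of maximal same-sign connected subtrees of $T_f$ (so the number of crossing edges of $T_f$ is $c-1$), and let $l'_\pm$ be the number of leaves of $T_f$ contained in $V_\pm$. The critical combinatorial lemma is that $l'_+ \leq (c-1)(l_f-1)$. If $l'_+ \leq l_f - 1$ this is immediate from $c \geq 2$; and if $l'_+ = l_f$ then $V_+$ contains every leaf of $T_f$, so since in a tree any connected subgraph containing all leaves must equal the whole tree, $V_+$ must be disconnected (as $V_-$ is nonempty), forcing $c \geq 3$ and $(c-1)(l_f-1) \geq 2(l_f-1) \geq l_f$ for $l_f \geq 2$.

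For the main calculation, let $\sigma^-$ agree with $\sigma^*$ outside $T_f$ and equal $-1$ throughout $T_f$. Expanding term by term, the contributions outside $T_f$ cancel and one obtains
\begin{equation*}
\energy^{\emb}(\sigma^*) - \energy^{\emb}(\sigma^-) \;=\; 2a_+ + 2|F_{\mathrm{cross}}|,
\end{equation*}
with $|F_{\mathrm{cross}}| = -\sum_{e \text{ crossing}} F_f^e$ and $a_+ = \sum_{i_k \in V_+} h'_{i_k} + \sum_{j:\, i_{\tau(f,j)} \in V_+} J_{fj}\, s^*_{j_{\tau(j,f)}}$. The choice of $h'_{i_k}$ is designed precisely so that for each non-leaf $i_k$ the bias $h'_{i_k} = \sum_{j \in \onbr(i_k)} |J_{fj}|$ exactly dominates the worst external-coupler contribution $-\sum |J_{fj}|$ term by term, leaving only the $-l'_+ C_f/l_f$ leaf correction; hence $a_+ \geq -l'_+ C_f/l_f$. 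Combined with $|F_{\mathrm{cross}}| > (c-1)(l_f-1) C_f/l_f$ and the combinatorial lemma, this yields $\energy^{\emb}(\sigma^*) - \energy^{\emb}(\sigma^-) > \tfrac{2 C_f}{l_f}[(c-1)(l_f-1) - l'_+] \geq 0$, contradicting optimality. The case $\mathrm{sign}(h_f)=-1$ is symmetric, comparing against the uniform $+1$ configuration.

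Once uniformity on each $T_i$ is established, the one-to-one correspondence between $\argmin \energy^{\emb}$ and $\argmin \energy$ is immediate, since on uniform configurations $\energy^{\emb}$ differs from $\energy$ by the additive constant $\sum_{i}\sum_{e \in \edge(T_i)} F_i^e$. For the spectral gap, substituting $F_i^e = -(l_i-1)C_i/l_i - g_i/2$ into the same calculation yields $\energy^{\emb}(\sigma) - \energy^{\emb}(\sigma^-) \geq (c-1) g_f \geq g_f$ for every $\sigma$ non-uniform on some $T_f$, so every excited state that is non-uniform on some $T_i$ lies at least $\min_i g_i$ above the ground state, while the gap among uniform configurations is exactly the original Ising gap. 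The crux, and the main obstacle, is the combinatorial lemma; its tightness (saturated by a single-edge flip when $c=2$) is what makes $-(l_i-1)C_i/l_i$ the correct threshold.
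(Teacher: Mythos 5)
Your proof is correct, and it takes a genuinely different route from the paper's. The paper proves the theorem in detail only for the topological-minor (chain) case, by locating one maximal monochromatic segment of the ``wrong'' sign and flipping that single segment, so that at most two ferromagnetic edges are involved and the leaf correction $-C_i/2$ appears at most once; the general tree case is then dispatched in one sentence by considering only the configuration in which ``all but one leaf'' has the wrong sign. You instead give a single unified argument for arbitrary trees: compare the non-uniform minimizer against the configuration that is constant (equal to $-\mathrm{sign}(h_f)$) on all of $T_f$, so that all $c-1$ crossing edges are repaired at once, and control the possible loss from the $l'_+$ leaf corrections via the counting inequality $l'_+ \le (c-1)(l_f-1)$. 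That lemma is the real content of your argument and it is proved correctly (the case $l'_+=l_f$ forces $V_+$ to be disconnected, hence $c\ge 3$); the energy bookkeeping $\energy^{\emb}(\sigma^*)-\energy^{\emb}(\sigma^-)=2a_+ + 2|F_{\mathrm{cross}}|$ and the bound $a_+\ge -l'_+C_f/l_f$ are also right, and the degenerate case $C_f=0$ still yields a strict decrease because the hypothesis on $F$ is strict. What your approach buys is a complete proof of the general-tree statement, which the paper only sketches; what it costs is the extra combinatorial lemma, which a ``local'' variant (flip one maximal wrong-sign component $K$ at a time, using $|\partial K|(l_f-1)\ge l_K$ for a proper connected subtree) would avoid --- that local variant is the natural tree generalization of the paper's segment argument. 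One cosmetic caveat: for the spectral-gap claim both you and the paper only establish that non-uniform states lie at least $\min_i g_i$ above the ground state and that uniform states reproduce the original spectrum shifted by a constant, which gives the gap as a lower bound; exact equality would additionally require exhibiting a state saturating one of the two quantities, but your treatment is no weaker than the paper's on this point.
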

\vnote{More elaboration about $\Delta_i$.}
See Figure~\ref{fig:new-parameters} for an example of the  corresponding
parameters in the embedded Ising Hamiltonian (for the topological-minor-embedding case). 
\begin{figure}[h]
\centering{
\includegraphics[width=5in]{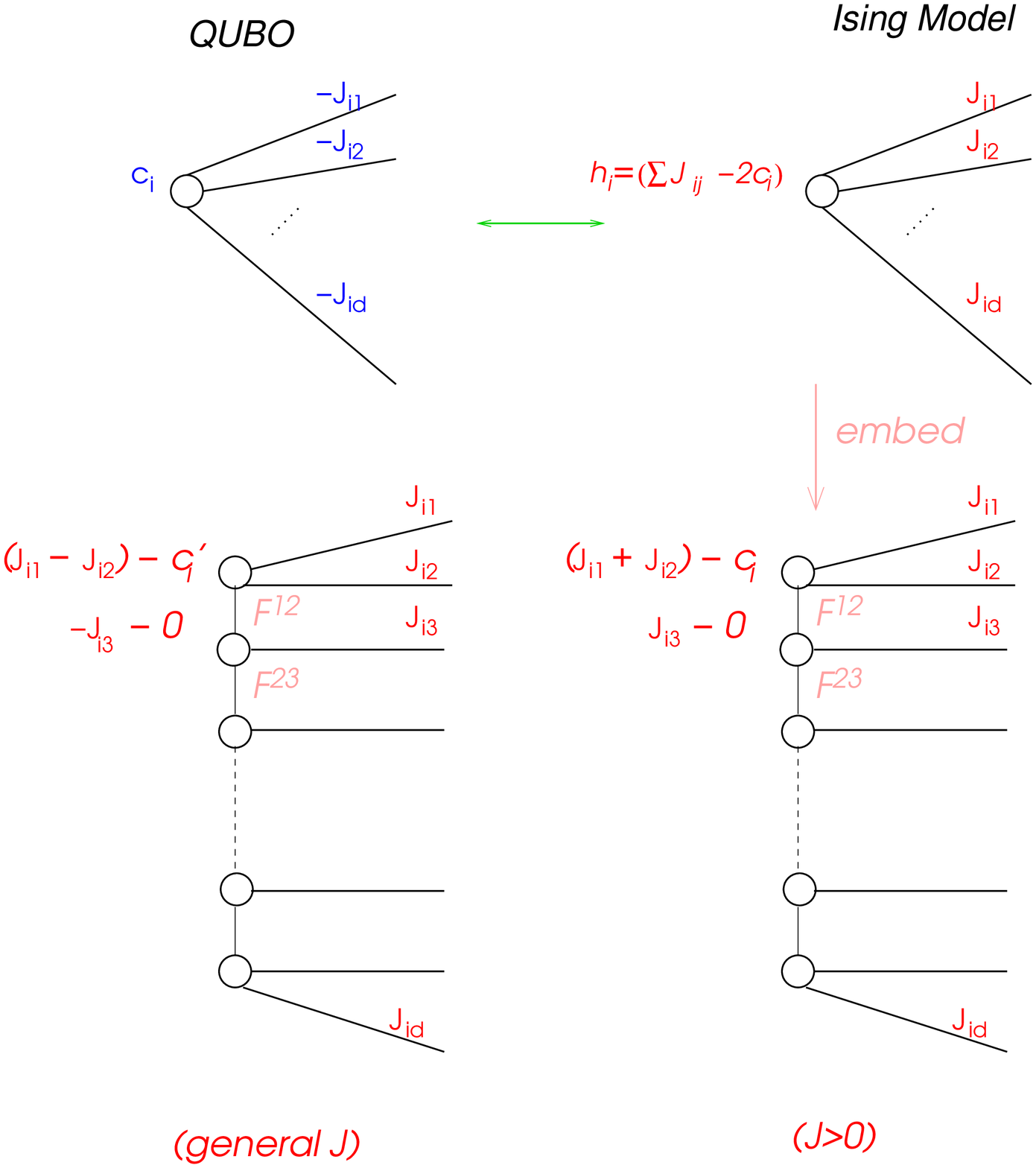}
} 
\caption{The corresponding parameters in the embedded Ising Model. Bottom
  right, for $J>0$; bottom left, for general $J$: $J_{i1}>0$, $J_{i2}<0,
  J_{i3}<0$, and  $c_i'= c_i - \sum_{j \in \nbr(i), J_{ij}<0} J_{ij}$.}
\label{fig:new-parameters}
\end{figure}

In the following, we first explain the main idea behind, followed
by the formal proof.
 For the
illustration purpose, suppose $h_i>0$, consider the simplest case in which all but one leaf is $-1$.
Now, consider the energy change if
the leaf is flipping from $+1$ to $-1$. Our goal is to set $F$s
as small (in terms of the magnitude) as possible, such that the energy
change is at least greater than zero. Note that the energy change
$\Delta\energy_{i_k} = $
$ 2 (h'_{i_k} + \sum_{j\tau(i,j) \in
  \onbr(i_k)}\delta_{j_{\tau(j,i)}}J_{ij} - F)$ where $\delta_{j_{\tau(j,i)}}$
depends on the 
sign of spin $s_{j_{\tau(j,i)}}^*$. We would like to bound
$\Delta\energy_{i_k}$ without knowing the signs of
$s_{j_{\tau(j,i)}}^*$s.
Observe that if $J$s are all positive,
then the worst case we have 
$\Delta\energy_{i_k} \ge 2 (h'_{i_k} - \sum_{j_{\tau(i,j)} \in
  \onbr(i_k)} J_{ij} -F)$. And in general 
$\Delta\energy_{i_k} \ge 2 (h'_{i_k} - \sum_{j_{\tau(i,j)} \in
  \onbr(i_k)} J_{ij}^+ + \sum_{j_{\tau(i,j)} \in
  \onbr(i_k)} J_{ij}^-  -F)$, where $J_{ij}^+>0$ and $J_{ij}^-<0$. Consequently, setting $F<h'_{i_k} - \sum_{j_{\tau(i,j)} \in
  \onbr(i_k)} |J_{ij}|$ would imply $\Delta\energy_{i_k}>0$.
One can then extend this argument to a segment that needs to be flipped, as
illustrated in Figure~ \ref{fig:sketch-proof}.
\begin{figure}[h]
$$
\begin{array}{cc}
\includegraphics[width=3in]{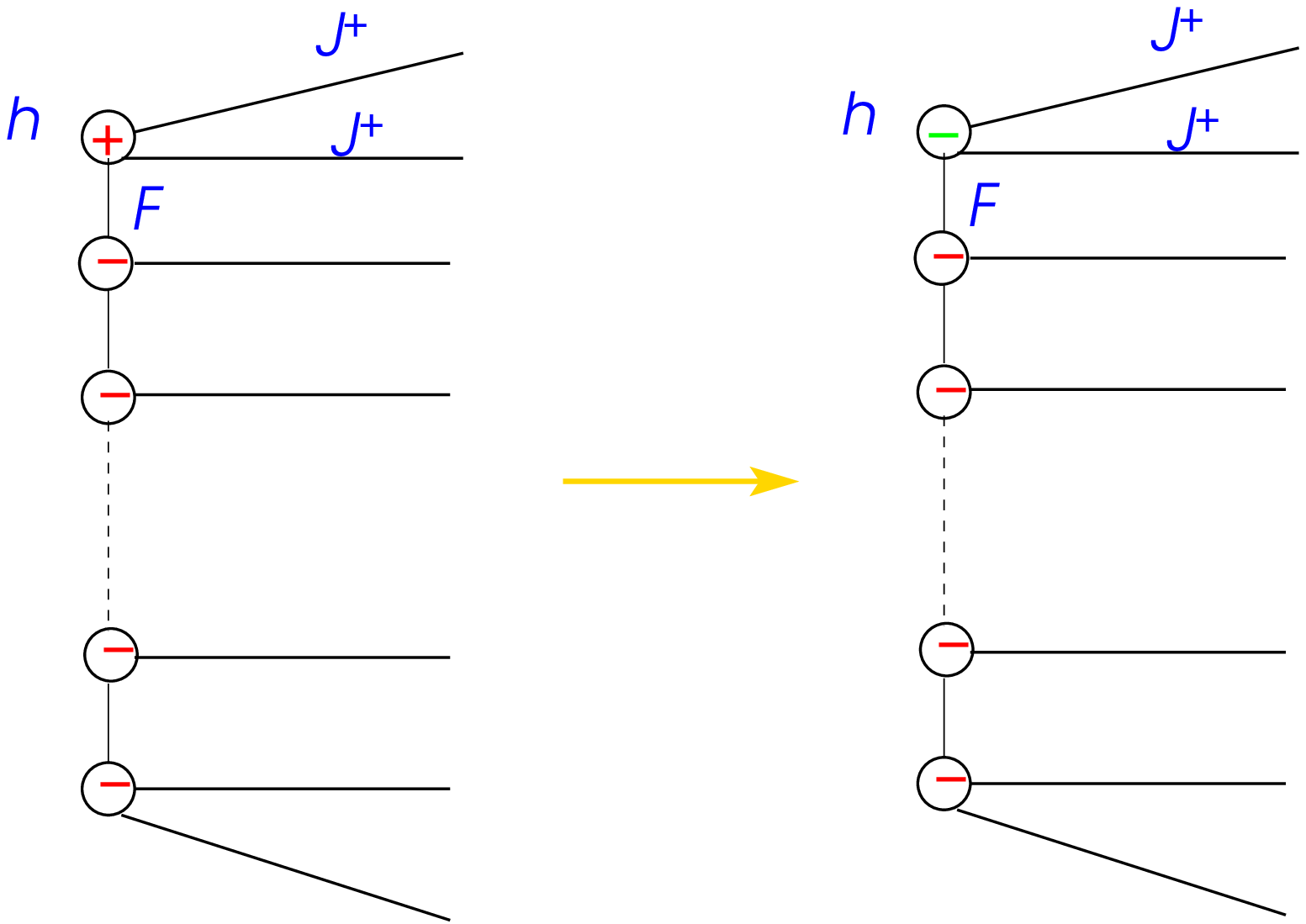} &
\includegraphics[width=3in]{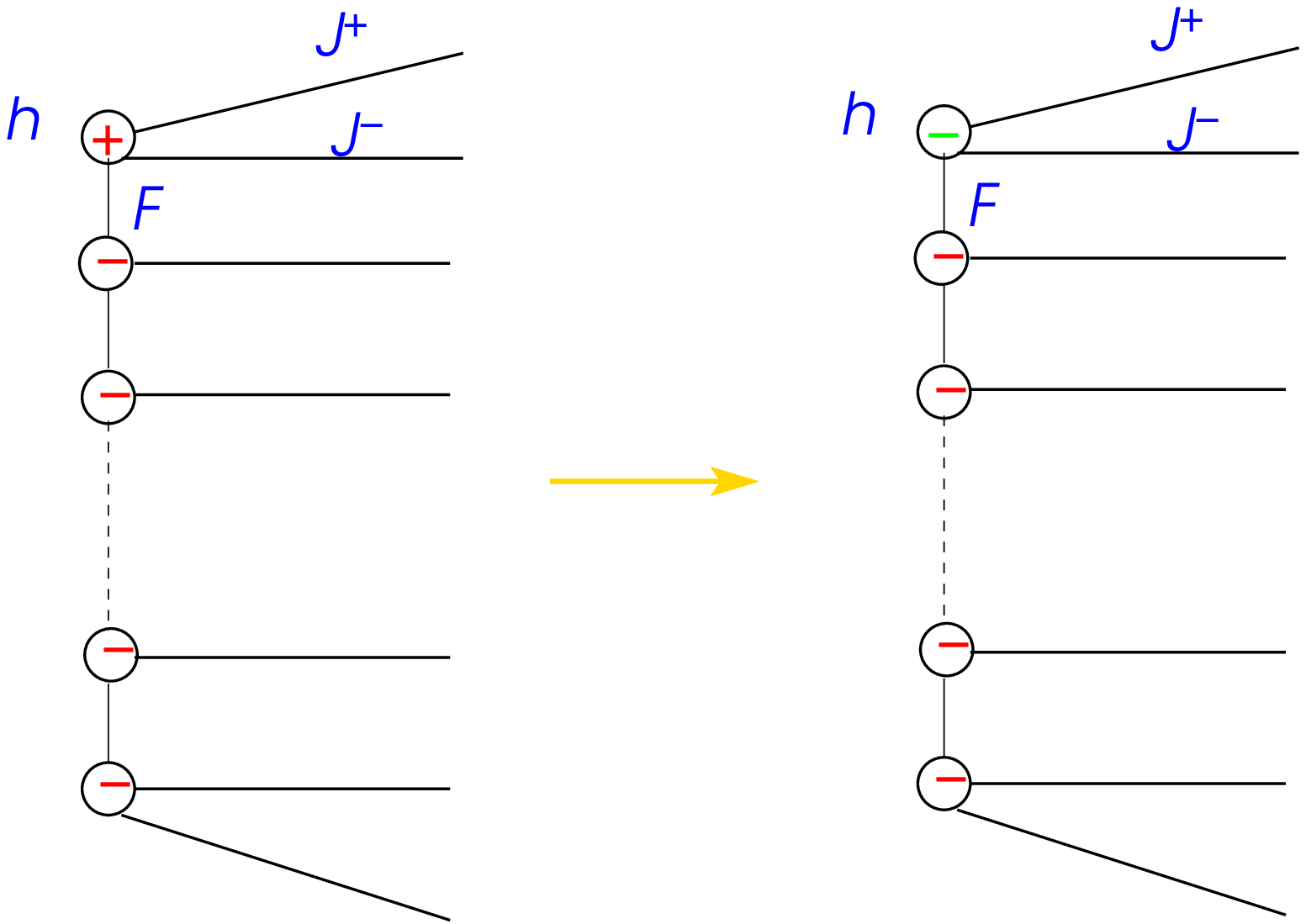}  \\
(a) h>0: \Delta\energy \ge 2 (h - J^+ - J^+ -F)  & (b) h>0: \Delta\energy \ge 2 (h - J^+ + J^- -F)  \\
\\
\includegraphics[width=3in]{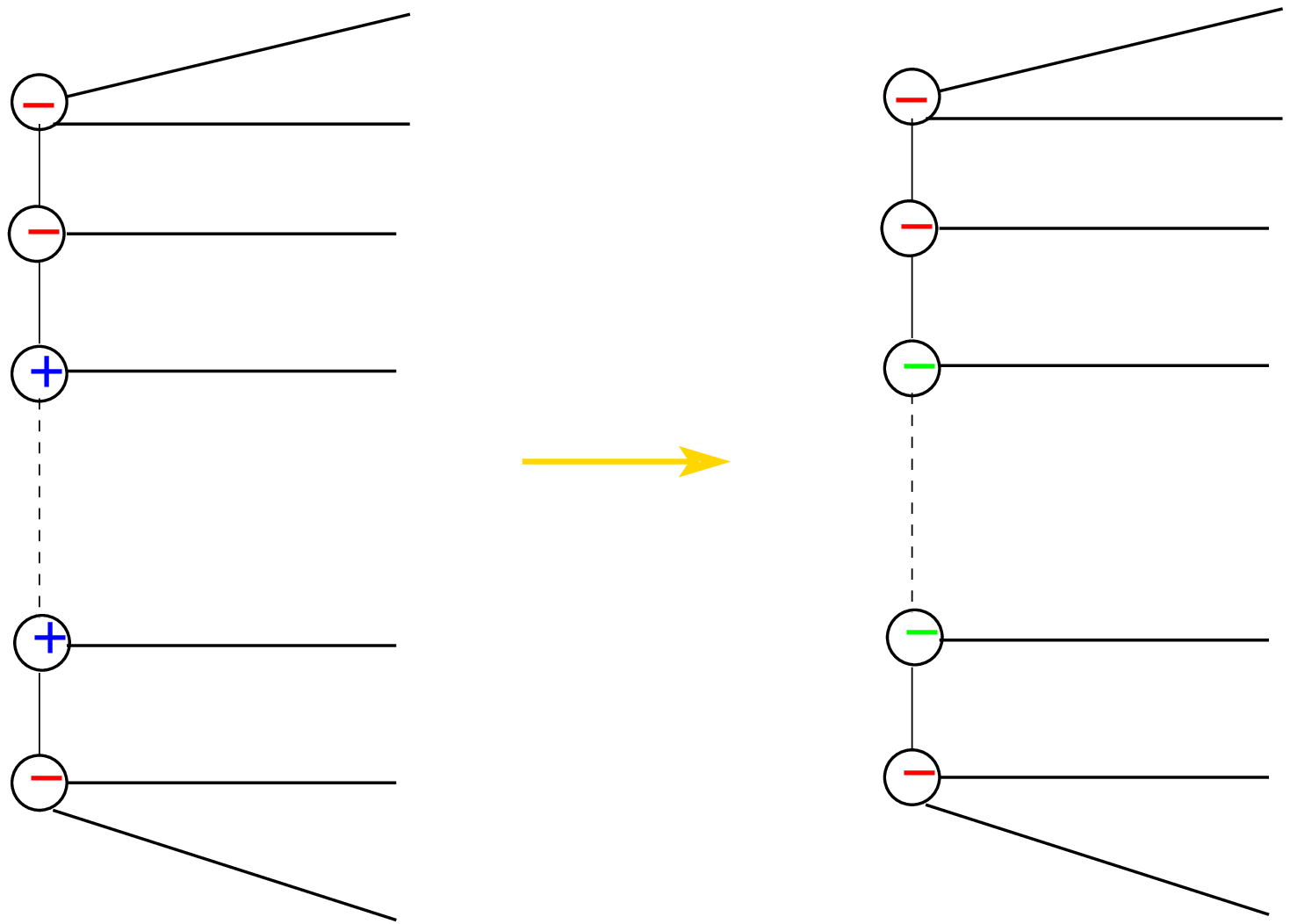} &
\includegraphics[width=3in]{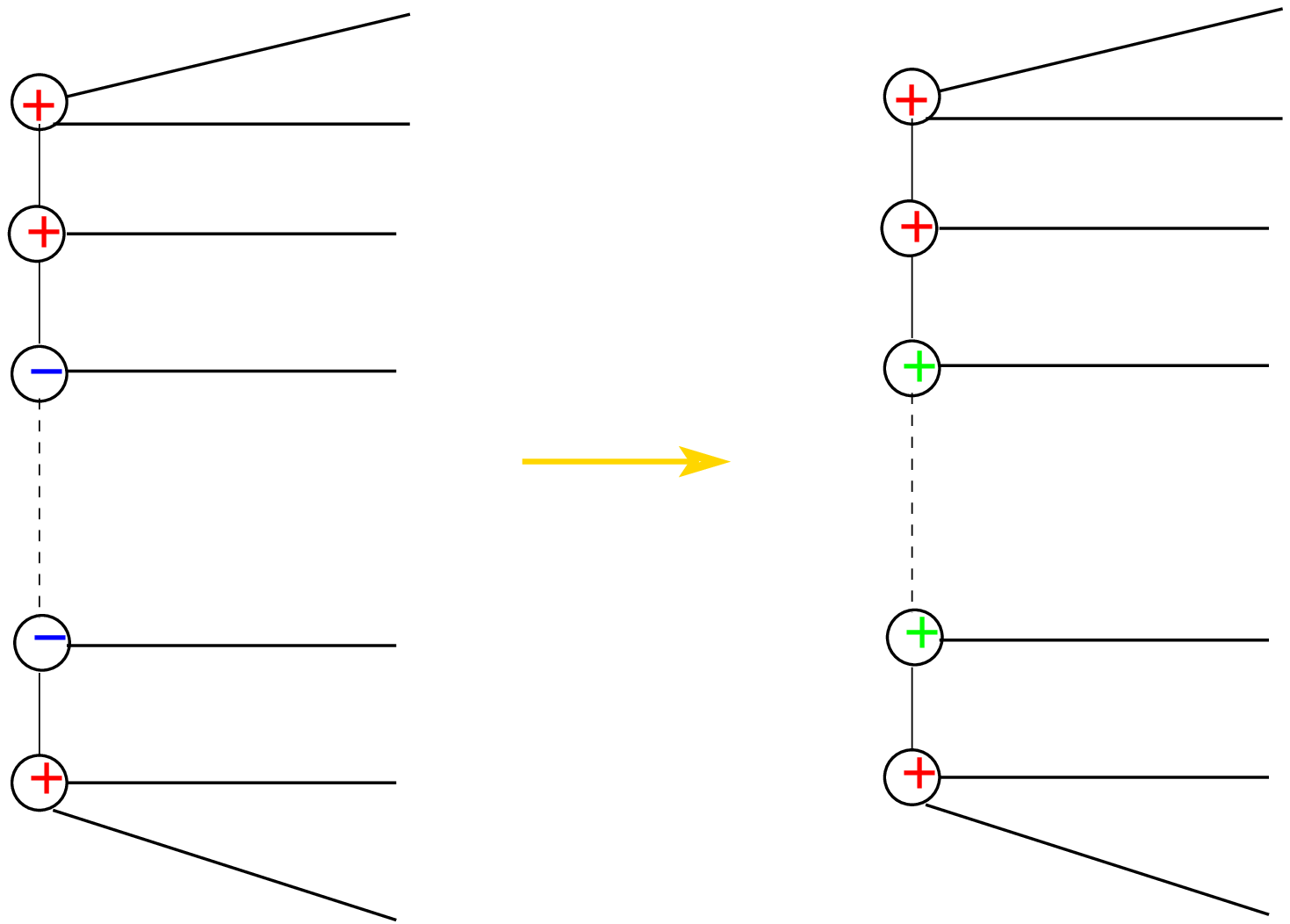}  \\
(c) h>0 & (d) h<0
\end{array}
$$
\caption{$(a)$ For $h>0$ and positive $J$s (denoted by $J^+$), the energy change $\Delta\energy \ge 2 (h - J^+ -
  J^+ -F)$ when flipping the leaf's spin from $+1$ to $-1$. 
$(b)$ For $h>0$, 
$\Delta\energy \ge 2 (h - J^+ + J^- -F)$ when flipping the leaf's spin from $+1$ to
  $-1$, where $J^+>0$, $J^-<0$.
(c) For $h>0$, flip a segment from $+1$ to $-1$. (d) For $h<0$, flip a segment from $-1$ to $+1$.}
\label{fig:sketch-proof}
\end{figure}

\begin{proof}
It is easy to check that $\sum_{i_k \in
  \ver(T_i)} h'_{i_k} = sign(h_i)|h_i| = h_i$.

We will first prove the theorem when $\Gemb$ is a 
topological-minor-embedding of $G$. 
That is, each $T_i$ is a chain $(i_1,i_2, \ldots,
i_{t_i})$ connected by consecutive vertices. 
Thus, the corresponding embedded energy
function is given by 
\begin{eqnarray}
    \energy^{\emb}(s_1, \ldots, s_N) 
    &=& \sum_{i \in \ver(G)} \left( \sum_{k=1}^{t_i} h'_{i_k} s_{i_k} + 
\sum_{k=1}^{t_i-1} F_i^{k(k+1)}s_{i_k}s_{i_{k+1}} \right) + \sum_{ij \in \edge(G)} J_{ij} 
s_{i_{\tau(i,j)}} s_{j_{\tau(j,i)}}
\label{eqn:Eemb-top}
\end{eqnarray}
In this case, we have $l_i=2$.
We prove by contradiction. 
Suppose NOT. That is, there exists $i \in \ver(G)$, 
and $1 \le k <t_i$  such that $s_{i_k}^*s_{i_{k+1}}^*=-1$.
We distinguish two cases based on the value of $h_i$.

For $h_i>0$,
let $p$ be the smallest index
such that $s_{i_{p-1}}^*=-1$ and $s_{i_p}^*=+1$, and let $q$ be the smallest index
such that  $s_{i_q}^*=+1$ and $s_{i_{q+1}}^*=-1$. That is, we have
$s_{i_1}^* = \ldots = s_{i_{p-1}}^* = -1$, $s_{i_p}^* = \ldots =
s_{i_q}^*=+1$. Note we have at least either $p>1$ or $q<n$. (Otherwise they
are all $+1$).
Then we claim that by flipping the segment $(i_p, \ldots, i_q)$ from $+1$s to $-1$s, the energy decreases, contradicting to the
optimality. 
(Notice that the $F$s within the segment do
 not change.) 
The energy change equals to
\begin{eqnarray*}
\Delta\energy^{\emb} &=& \energy^{\emb}(\ldots, -1,\ldots, -1, \underbrace{+1, \ldots, +1}, -1, \ldots )
 - 
\energy^{\emb}(\ldots, -1,\ldots, -1, \underbrace{-1, \ldots, -1}, -1, \ldots  )  
\\
&\ge& 2 \left(\sum_{k=p}^{q} h'_{i_k} - \sum_{k=p}^{q} \sum_{j_{\tau(j,i)} \in
   \onbr(i_k)}|J_{ij}| - F_i^{(p-1)p} - F_i^{q(q+1)} \right) \mbox{ (with convention that  $F_i^{01}=F_i^{t_i(t_i+1)}=0$)}
\\
&\ge&2 (-C_i/2 - F_i^{k(k+1)})~~\mbox{ (where $k=p$ if $p>1$, $k=q$ otherwise)} 
\end{eqnarray*} 

The argument for $h_i<0$ is similar except that we will flip from $-1$ to
$+1$ instead, and 
\begin{eqnarray*}
\Delta\energy^{\emb} &=& \energy^{\emb}(\ldots, +1,\ldots, +1, \underbrace{-1, \ldots, -1}, +1, \ldots )
 - 
\energy^{\emb}(\ldots, +1,\ldots, +1, \underbrace{+1, \ldots, +1}, +1, \ldots  )  
\\
&\ge& 2 \left(\sum_{k=p}^{q} -h'_{i_k} - \sum_{k=p}^{q} \sum_{j_{\tau(j,i)} \in
   \onbr(i_k)}|J_{ij}| - F_i^{(p-1)p} - F_i^{q(q+1)} \right) \\
&\ge&2 (-C_i/2 - F_i^{k(k+1)})~~\mbox{ (where $k=p$ if $p>1$, $k=q$ otherwise)} 
\end{eqnarray*}

Therefore, in both cases,
if 
$F_i^{p(p+1)} < -C_i/2$, for all $1 \le p<t_i$,
we have $\Delta\energy^{\emb} >0$, contradicting to the optimality. Hence all
$s_{i_p}^*$ must be of the same sign.

For the general minor-embedding when $T_i$s are trees, we can similarly
argue that for $h_i>0$, if all but one leaf is positive ($+1$), then
flipping them (from $+1$s to $-1$s) will decease the energy provided that
$F_i < -\frac{l_i-1}{l_i}C_i$. Similarly, one can argue for the case when $h_i<0$.

Notice that if we set $F_i= -\frac{l_i-1}{l_i}C_i - g_i/2$, then the
energy change $\Delta\energy^{\emb} \ge g_i$. In this case, the
spectral gap of the embedded Ising Hamiltonian will be the minimum
of $\min_{i \in \ver(G)} g_i$ and the spectral gap of the original Ising
Hamiltonian. 
\end{proof}

\paragraph{Remark.} One can generally set 
$ h'_{i_k} = sign(h_i) 
(  \sum_{j_{\tau(j,i)} \in \onbr(i_k)} |J_{ij}| - C_{i_k})
$ where $C_{i_k}$s are chosen such that $\sum_{i_k \in \ver(T_i)} C_{i_k}
= C_i (=\sum_{j \in \nbr(i)}|J_{ij}| - |h_i|)$, and set the $F$s
accordingly. This flexibility is useful when the precision for
parameters is limited. (In the above theorem, we give an upper bound for the
ferromagnetic coupler strengths. A natural question is how tight our bound
is. Can one get an better bound without solving the original problem?)




\section{Weighted Maximum Independent Set (WMIS) Problem}
\label{sec:wmis}
First, we formulate WMIS problem as a special case of QUBO in
Section~\ref{sec:mis-special-QUBO}. 
Then, we apply Theorem~\ref{thm:main} to set parameters for the
embedded Ising Hamiltonian
for the corresponding MIS problem in Section~\ref{sec:embed-mis}.

\subsection{Formulate WMIS Problem As a Special Case of QUBO}
\label{sec:mis-special-QUBO}
\paragraph{Weighted MIS (WMIS).}
Given an undirected vertex-weighted graph
$G$.
Let $\ver(G)=\{1, 2, \ldots, n\}$, let $c_i \in \reals^{+}$ be the weight of vertex
$i$. WMIS seeks to find a $\ms{wmis}(G)=S \subseteq V$ such that $S$ is independent
and the total weight of $S$ ($=\sum_{i \in S} c_i$) is maximized. 

\begin{theorem}
If $J_{ij} \ge \min\{c_i,c_j\}$ for all $ij \in \edge(G)$, then the maximum
  value of
$$ \oy(x_1,\ldots, x_n) = \sum_{i \in \ver(G)}c_i x_i - \sum_{ij \in \edge(G)}
  J_{ij}x_ix_j$$
is the total weighted of the WMIS. 
In particular if $J_{ij} > \min\{c_i,c_j\}$ for all
      $ij \in \edge(G)$, then $\ms{wmis}(G) = \{i \in \ver(G) : x^*_i =
1\}$,
where $(x^*_1, \ldots, x^*_n) = \argmax_{(x_1, \ldots, x_n) \in \{0,1\}^n}
\oy(x_1, \ldots, x_n)$.
\label{thm:mis}
\end{theorem}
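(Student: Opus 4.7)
The plan is a two-step comparison between the values of $\oy$ on $0/1$--vectors and the weights of independent sets in $G$. In the first step I would observe that for any independent set $S \subseteq \ver(G)$, the indicator vector $x^S$ (with $x^S_i = 1$ iff $i \in S$) kills every edge term (since no two endpoints of an edge both lie in $S$), so $\oy(x^S) = \sum_{i \in S} c_i$. Taking $S$ to be a WMIS of $G$ gives $\max \oy \ge \sum_{i \in \ms{wmis}(G)} c_i$.

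For the matching upper bound I would use a local ``flip'' argument. Given any vector $x \in \{0,1\}^n$, let $S(x) = \{i : x_i = 1\}$. If $S(x)$ is not independent, pick any edge $ij \in \edge(G)$ with $x_i = x_j = 1$ and assume WLOG $c_i \le c_j$. Flipping $x_i$ from $1$ to $0$ changes $\oy$ by
\begin{equation*}
\Delta = -c_i + \sum_{k \in \nbr(i)} J_{ik} x_k \ge -c_i + J_{ij} \ge -c_i + \min\{c_i,c_j\} = 0,
\end{equation*}
where the first inequality uses $x_j=1$ and that all $J_{ik} x_k$ terms are nonnegative, and the second uses the hypothesis $J_{ij} \ge \min\{c_i,c_j\}$. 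Hence the flip does not decrease $\oy$. Iterating the flip (on any remaining conflicting edge) terminates after at most $n$ steps and yields an indicator vector of some independent set $S^\dagger$ with $\oy(x) \le \oy(x^{S^\dagger}) = \sum_{i \in S^\dagger} c_i \le \sum_{i \in \ms{wmis}(G)} c_i$. Combined with the first step, this proves the first assertion.

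For the second assertion (the strict-inequality case), the same flip calculation now gives $\Delta \ge -c_i + J_{ij} > 0$, so any $x$ whose support is not independent is \emph{strictly} suboptimal. Therefore the support of an optimizer $x^*$ must be an independent set $S^*$, and since $\oy(x^*) = \sum_{i \in S^*} c_i$ is the maximum, $S^*$ must be a WMIS. The main subtlety, and the only point that requires care, is handling arbitrary signs and magnitudes of the other $J_{ik}$ terms in $\Delta$; this is where using $\min\{c_i, c_j\}$ rather than, say, $\max$, is essential, because the positivity of the remaining $J_{ik} x_k$ terms is what lets a single edge's penalty dominate $c_i$ regardless of what the other neighbors look like. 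No deeper obstacle is expected.
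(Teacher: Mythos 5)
Your proposal is correct and uses essentially the same argument as the paper: if the support of a candidate solution contains an edge, delete the endpoint of smaller weight, and the objective changes by $-c_i + \sum_{k \in \nbr(i)} J_{ik}x_k \ge -c_i + J_{ij} \ge 0$ (strictly positive under the strict hypothesis), since all the $J$'s are nonnegative under the theorem's assumption. The only difference is that you explicitly iterate the flip to establish the first (non-strict) assertion, whereas the paper writes out only the strict case and leaves the matching upper bound implicit; this is a welcome completion rather than a new idea.
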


\begin{proof}[{\bf Proof of Theorem~\ref{thm:mis}.}]
  Let $(x^*_1, \ldots, x^*_n) = \argmax_{(x_1, \ldots, x_n) \in \{0,1\}^n} \oy(x_1, \ldots, x_n)$.
Denote $S^* = \{i \in \ver(G): x^*_i=1\}$. 
We'll prove that if $J_{ij} > \min\{c_i,c_j\}$ for all $ij \in \edge(G)$,
then $S^*$ is an independent set.

    Suppose on the contrary, that is, there exists an edge $yz$
    in the subgraph
    induced by $S^*$. WLOG, assume that $c_y < c_z$. 
    Consider removing $y$ from
    $S^{*}$. Let $S' = S^{*} \setminus \{y\}$.
    The weight change  equals to $- c_y + \sum_{j \in \nbr(y) \cap S^{*}}J_{yj} \ge
    -c_y + J_{yz} >0$, contradicting to the
    optimality of $S^{*}$.
\end{proof}

The above theorem  is a generalization of the known fact for unweighted case
of MIS (see \cite{BH02} and references therein).
For the unweighted case  of MIS,  $c_i = 1$ for all $i \in \ver(G)$.
Thus, it is sufficient to choose $J_{ij} = 1+ \epsilon$ for all $ij \in
\edge(G)$ for  some $\epsilon>0$. 
Accordingly, the corresponding energy function of the Ising Model for MIS
is $\energy(s_1, \ldots, s_n) = \sum_{i \in \ver(G)} \left( \deg_i(1+\epsilon) - 2 \right)s_i - \sum_{ij \in \edge(G)} (1+\epsilon)s_is_j$.
\paragraph{Remark.}If we choose $J$ to be exactly $1$ instead (as in \cite{KL02,KLO04}), then we can only guarantee the
size of the maximum independent set, but the returned set is not
necessarily independent. 
For instance,  when $G=K_4$,  any (adjacent) two vertices also has the minimum
energy of $one$.

Note that we thus can conclude that the Ising problem is NP-hard because
WMIS is NP-hard. Indeed, Barahona ~\cite{Barahona82} showed the NP-hardness
of a special
Ising problem through the reduction of MIS problem on cubic
graph (which remains NP-hard). Notice that for WMIS on a planar graph, there is a PTAS
algorithm~\cite{Baker}. Recently, for the Ising problem on a planar graph, 
Bansal~et~al.~\cite{bansal-2007} applied the same technique
to obtain a PTAS algorithm in $O(n2^{36/\delta})$ where $(1-\delta)$ is the
approximation ratio. 
As another side note, conversely (to the fact that WMIS is a special case
of QUBO), Boros~et~al (see e.g. ~\cite{BH02}) has shown that a problem of QUBO
on $G$
can also be converted to a WMIS but on a different graph.

\subsection{Embedded Ising Hamiltonian for Solving MIS}
\label{sec:embed-mis}
Let $G$ be the graph of MIS problem and 
$\Gemb$ be the topological-minor-embedding of $G$ in $\uni$.
For the unweighted MIS, we have $C_i = \sum_{j \in \nbr(i)}J_{ij} - h_i =
2$. Therefore, according to Theorem~\ref{thm:main}, it suffices to set 
$F_i < -C_i/2 =-1$.
That is, the embedded Ising Hamiltonian for solving unweighted MIS is given
as in Eq.~\eqref{eqn:Eemb-top}, with 
$$ h'_{i_k} =  \left\{
\begin{array}{ll}
  d_{i_k} J - 1 & k=1, t_i\\
  d_{i_k} J & 1<k<t_i
\end{array}
\right.
$$ 
where $ d_{i_k} = |\onbr(i_k)|$,
and $J_{ij} \equiv J>1$, $F_{i}^{k(k+1)} \equiv F<-1$.
In particular, 
 for degree-3 hardware graph $\uni$, by setting
$J=1+\epsilon$, and $F = -(1+\epsilon)$, for some $\epsilon>0$, there are only 6 different
parameter values, namely, $\{-(1+\epsilon),0,\epsilon,1+\epsilon,
1+2\epsilon, 1+3\epsilon\}$ in Eq.~\eqref{eqn:Eemb-top}.

\section{Discussion}
\label{sec:discussion}
In this paper, we introduce minor-embedding in AQC. 
In particular, 
  we show that the NP-hard QUBO problem can be solved 
  using an adiabatic quantum computer that implements Ising spin-1/2
  Hamiltonians,  through minor-embedding reduction.
  There are two components to this reduction: embedding and 
    parameter setting. 
Given a minor-embedding, we show how to derive the values for the
corresponding parameters, in particular, a good upper bound (in terms of
magnitude) for the ferromagnetic coupler strengths, of the  embedded
  Ising Hamiltonian such that there is one-one correspondence between the
  groundstate of the original Ising Hamiltonian and the one of the embedded Ising
  Hamiltonian. 

There are many algorithmic
  problems related to minor-embedding in AQC that remain to be addressed. In
  particular, the problems relate to the efficiency of the
  reduction. These problems in turn relate to the running time or
  complexity of quantum
  adiabatic algorithms.
  Recall that according to the adiabatic theorem, the running time of an
  adiabatic algorithm depends on the
minimum spectral gap of the system Hamiltonian, which however might be as
hard as solving the original problem. 
Despite several serious investigations, 
the power of AQC remains an open
question~\cite{DMV01,DV01,Reichardt-04,ioannou-2007,ADKLLR04}.
How does the embedding reduction effect the time complexity of an adiabatic
algorithm?
In order to address this question, one will also need to specify the initial
Hamiltonian. In~\cite{AC08}, we show that for some special cases, 
how the embeddings,
parameters, and initial Hamiltonians can effect the minimum spectral gaps.
The effect of the embedding  and its consequential initial
Hamiltonian
on the complexity of adiabatic algorithms remains to be
investigated.
In the following we state several main problems that need to investigate in
order to facilitate the design of adiabatic algorithms and
AQC architectures. Partial
results to these problems will appear in our subsequent papers.

\paragraph{P1. Measurement for the minor-embedding.} Define a measure for the
minor-embedding such that a good minor-embedding corresponds to a reduced
problem that admits an efficient adiabatic algorithm. 

\paragraph{P2. Embedding-dependent initial Hamiltonian.} Design an
embedding-dependent initial Hamiltonian for a given minor-embedding such that the adiabatic algorithm
for the reduced problem is at
least as efficient as the adiabatic algorithm (with the best possible initial
Hamiltonian) for the original problem.

\paragraph{P3. Hardware graph design.} 
Given a family $\mathcal{F}$ of graphs (which consists of classically hard
instances), the problem is to design a hardware graph (called a {\em
  $\mathcal{F}$-minor-universal} graph) which
is as small as possible (in terms of total number of vertices and edges)
such that 
\begin{itemize}
\item all known physical constraints are satisfied;
\item all graphs in $\mathcal{F}$ are embeddable;
\item a good embedding of each graph in $\mathcal{F}$ can be efficiently
  computed. 
\end{itemize}

\vnote{treewidth, and Alon's work}

\section*{Acknowledgment}
I would like to thank my incredible colleagues:   Mohammad Amin,
Andrew Berkley,      Richard Harris, Mark Johnson,
      Jan Johannson,
      Andy Wan,       
      Colin Truncik, 
      Paul Bunyk,
      Felix Maibaum,
Fabian Chudak,
      Bill Macready,
       and Geordie Rose.
Thanks also go to David Kirkpatrick for the discussion, advice and  encouragement.
I would also like to thank Bill Kaminsky for his detailed comments.
\bibliographystyle{abbrv}

\newpage
\appendix
\centerline{\bf \large Appendix}


\section{Energy Function of the Ising Hamiltonian}
In this section, we show that the energy function of the Ising Hamiltonian
in Eq.~\eqref{eqn:final-ham} is given in Eq.~\eqref{eqn:OE}.
That is, replace each $\sigma^z_i$ by the
  variable $s_i \in \{-1,+1\}$. 
The equivalence follows from two basics: (1) Eigenvalues and eigenvectors of
$\sigma^z$: $\sigma^z\ket{0} = (+1) \ket{0}$, $\sigma^z\ket{1} = (-1)
\ket{1}$,
where $\ket{0} \mdef [1,0]^{\dagger}$, $\ket{1} \mdef [0,1]^{\dagger}$.
(2) Tensor product property: $(A \otimes B)(\ket{u} \otimes \ket{v}) =
A\ket{u} \otimes B\ket{v}$.
More precisely, for $\ket{z} = \ket{z_1} \otimes \ldots \otimes \ket{z_n}$,
$z_i \in \{0,1\}$, $i=1, \ldots, n$.
  \begin{eqnarray*}
 \ham_{\ms{final}} \ket{z}  &=& \left(\sum_{i \in \ver(G)} h_i \sigma^z_i + \sum_{ij \in \edge(G)} J_{ij}
\sigma^z_i \sigma^z_j \right)  \ket{z_1} \otimes \ldots \otimes \ket{z_n}\\
&=& \sum_{i \in \ver(G)} h_i \ket{z_1} \otimes \ldots \otimes 
(-1)^{z_i}\ket{z_i} \ldots \otimes \ket{z_n}  + \sum_{ij \in \edge(G)} J_{ij} 
\ket{z_1} \otimes \ldots \otimes 
(-1)^{z_i}\ket{z_i} \ldots \otimes (-1)^{z_j}\ket{z_j} \ldots \otimes \ket{z_n}\\
&=& \left(\sum_{i \in \ver(G)} h_i (-1)^{z_i} + \sum_{ij \in \edge(G)} J_{ij}
(-1)^{z_i} (-1)^{z_j}\right)  \ket{z_1} \otimes \ldots \otimes \ket{z_n}
  \end{eqnarray*}

Therefore, the energy function of $\ham_{\ms{final}}$ is 
$$ \energy (z_1, \ldots, z_n) =  \sum_{i \in \ver(G)} h_i (-1)^{z_i} + \sum_{ij \in \edge(G)} J_{ij}
(-1)^{z_i} (-1)^{z_j}$$ where $z_i \in \{0,1\}$, for $i=1, \ldots, n$.
Replace $(-1)^{z_i}$ by $s_i \in \{+1,-1\}$, we thus have 
Eq.\eqref{eqn:OE}.

\end{document}